\documentclass[letterpaper, conference]{IEEEtran}

\IEEEoverridecommandlockouts
\usepackage{amsmath,amssymb,amsthm,amsfonts,bm}
\usepackage{graphicx}
\usepackage{booktabs}
\usepackage{multirow}
\usepackage{array}
\usepackage{mathtools}
\usepackage{cite}
\usepackage{url}
\usepackage{xcolor}
\usepackage{subcaption}
\usepackage{microtype}
\usepackage{balance}
\graphicspath{{cdc_numerics_three_examples/}}

\usepackage{booktabs}
\usepackage{siunitx}
\newtheorem{theorem}{Theorem}
\newtheorem{proposition}{Proposition}
\newtheorem{remark}{Remark}
\newtheorem{corollary}{Corollary}

\newcommand{\R}{\mathbb{R}}
\newcommand{\1}{\mathbf{1}}
\newcommand{\pbar}{\bar p}
\newcommand{\cbar}{\bar c}
\newcommand{\epsdef}{\epsilon_{\mathrm{def}}}
\newcommand{\epsub}{\epsilon_{\mathrm{ub}}}
\newcommand{\Bset}{\mathcal{B}}
\newcommand{\U}{\mathcal{U}}
\newcommand{\norm}[1]{\lVert #1 \rVert}

\title{Budgeted Robust Intervention Design for Financial Networks with Common Asset Exposures}

\author{%
\IEEEauthorblockN{Giuseppe C. Calafiore} 
\IEEEauthorblockA{{\em Fellow}, IEEE -- Department of Electronics and Telecommunications, Politecnico di Torino, Italy\\
Email: giuseppe.calafiore@polito.it}
}

\begin{document}

\maketitle

\begin{abstract}
In the context of containment of default contagion
in financial networks, we here study a regulator that allocates
pre-shock capital or liquidity buffers across banks connected by
interbank liabilities and common external asset exposures. The
regulator chooses a nonnegative buffer vector under a linear
budget before asset-price shocks realize. Shocks are modeled as
belonging to either an $\ell_{\infty}$ or an $\ell_{1}$ uncertainty set, and the
design objective is either to enlarge the certified no-default/no-
insolvency region or to minimize worst-case clearing losses at
a prescribed stress radius. Four exact synthesis results are
derived. The buffer that maximizes the default resilience margin
is obtained from a linear program and admits a closed-form
minimal-budget certificate for any target margin. The buffer
that maximizes the insolvency resilience margin is computed
by a single linear program. At a fixed radius, minimizing the
worst-case systemic loss is again a linear program under $\ell_{\infty}$
uncertainty and a linear program with one scenario block per
asset under $\ell_{1}$ uncertainty. Crucially, under $\ell_{1}$ uncertainty, exact 
robustness adds only one LP block per asset, ensuring that the computational 
complexity grows linearly with the number of assets. A corollary identifies the exact budget
at which the optimized worst-case loss becomes zero.

Numerical experiments on the 8-bank benchmark of \cite{Calafiore2025}, on a synthetic core-periphery network,  and on a data-backed 107-bank calibration built from the 2025 EBA transparency exercise show large gains over uniform and exposure-proportional allocations. The empirical results also indicate that resilience-maximizing and loss-minimizing interventions nearly coincide under diffuse $\ell_\infty$ shocks, but diverge under concentrated $\ell_1$ shocks.
\end{abstract}

\section{Introduction}
Modern financial systems operate as highly interconnected networks where nodes represent financial institutions and links denote either bilateral obligations, such as interbank loans, or shared vulnerabilities, such as common asset holdings \cite{EisenbergNoe2001, GlassermanYoung2016}. While this interconnectedness facilitates liquidity during normal operation, it also creates pathways for systemic risk \cite{Acemoglu2015}. When external economic shocks reduce the value of a bank's assets, the institution may become unable to fully honor its financial liabilities. This payment shortfall can subsequently propagate to its creditors, potentially triggering a cascade of insolvencies known as {\em default contagion}, \cite{AllenGale2000}. From a systems and control perspective, mitigating this contagion poses a fundamental robust-design challenge: how to optimally allocate protective resources, such as pre-shock capital or liquidity buffers, across the network nodes to guarantee system stability or minimize total losses under worst-case disturbances.

Financial networks may amplify price shocks through two interacting channels. The first is counterparty contagion through interbank liabilities: a payment shortfall at one institution can propagate through the clearing network. The second is market contagion through common asset holdings: a common price move hits several balance sheets simultaneously and may trigger additional defaults through clearing feedback. Both channels are central in the systemic-risk literature \cite{AllenGale2000,EisenbergNoe2001,AllenBabusCarletti2012,Acemoglu2015,GlassermanYoung2016}, and exact LP formulations are available for one-period clearing models \cite{Elsinger2006,Calafiore2024}.

This paper studies a regulator that acts \emph{before} shocks realize. Starting from the bounded-shock stress-testing model of \cite{Calafiore2025}, we add a nonnegative buffer vector $b\in\R^n_+$ subject to $q^\top b\le B$ and ask how this budget should be distributed across banks. The emphasis is therefore on \emph{synthesis}, not on re-deriving robust stress tests: the stress-testing ingredients are inherited from \cite{Calafiore2025}, while the new question is how to turn them into exact pre-shock intervention rules. This viewpoint is complementary to ex-post rescue, threat-index, and default-cost formulations \cite{RogersVeraart2013,Demange2018,Sim2022,AraratMeimanjan2023}.

Relative to [10], the nontrivial step is not simply the
algebraic shift $\bar{c}\mapsto\bar{c}+b$. Once $b$ is chosen before the shock
is observed, the model becomes an outer regulator-design
problem wrapped around an inner robust-clearing problem.
Typically, such formulations involve min-max structures 
that are notoriously difficult to solve globally. However, by carefully applying 
strong duality to the inner worst-case clearing problem and lifting the buffer 
decision to the outer minimization, we demonstrate that the extra design
layer preserves exact tractability: no bilevel optimization,
mixed-integer reformulation, or scenario sampling is needed
to compute the optimal intervention.

Our contributions are fourfold. First, default-margin maximization is shown to be a linear program and yields a closed-form minimal-budget certificate for any target margin. Second, insolvency-margin maximization is again a linear program. Third, worst-case loss minimization at fixed radius is a linear program under $\ell_\infty$ uncertainty. Fourth, the analogous $\ell_1$ problem reduces to a linear program with $m$ scenario blocks, one per asset. An additional corollary identifies the exact budget at which the optimized worst-case loss becomes zero. From a control viewpoint, these formulations separate two objectives: \emph{certification}, namely maximizing the shock radius that can be guaranteed safe, and \emph{performance}, namely minimizing losses once defaults are already admissible.

We offer three numerical examples of increasing realism: an 8-node benchmark from \cite{Calafiore2025}, a synthetic 353-node core-periphery network used to isolate topology effects, and a data-backed 107-bank calibration built from the 2025 EBA transparency exercise \cite{EBA2025}. In the public-data example, interbank row and column totals are extracted from bank-level disclosures and reconciled by iterative proportional fitting/maximum entropy, see \cite{Upper2011,Anand2018}; the common-asset layer uses the five largest sovereign issuer buckets in the sample. Together, these cases show that the LP-based policies remain lightweight beyond the academic benchmark and, more importantly, that the geometry of the uncertainty set materially changes the optimal intervention pattern.

\section{Model and Problem Formulation}
We consider a one-period financial network with $n$ institutions. The nominal liabilities matrix is $\bar P=(\bar p_{ij})\in\R^{n\times n}_+$, where $\bar p_{ij}\geq 0$ is the amount owed by bank $i$ to bank $j$. Let $\pbar=\bar P\1$ denote total liabilities and let $A$ be the relative-liability matrix, defined row-wise by $a_{ij}=\bar p_{ij}/\bar p_i$ when $\bar p_i>0$, and $a_{ii}=1$ otherwise \cite{EisenbergNoe2001,Calafiore2025}.

Banks are exposed to $m$ external assets. Let $S\in\R^{n\times m}$ be the portfolio matrix, where $s_{ij}$ is the position of bank $i$ in asset $j$; long and short positions are both allowed. The nominal external net inflow vector is $\cbar\in\R^n$, the price perturbation is $\delta\in\R^m$, and the regulator injects the pre-shock buffer $b\in\R^n_+$. The realized net inflow is
\begin{equation}
    c=\cbar+b+S\delta.
    \label{eq:realized_c}
\end{equation}
The admissible intervention set is
\begin{equation}
    \Bset(B)\doteq\{b\in\R^n_+:\ q^\top b\le B\},
    \label{eq:budget_set}
\end{equation}
where $q\in\R^n_{++}$ collects per-unit intervention costs.

\subsection{Clearing and systemic loss}
For a realized net inflow $c$, the maximal clearing vector is obtained from the LP \cite{Calafiore2024,Calafiore2025}
\begin{equation}
\begin{aligned}
    \eta^\star(c)=\min_{p\in\R^n}\;&\1^\top(\pbar-p)\\
    \text{s.t. }&0\le p\le \pbar,\\
    &c+A^\top p\ge p.
\end{aligned}
\label{eq:clearing_lp}
\end{equation}
When feasible, \eqref{eq:clearing_lp} returns the maximal clearing vector and the associated systemic loss $\eta^\star(c)$; when infeasible we set $\eta^\star(c)=+\infty$. Throughout the paper we assume nominal no-default operation,
\begin{equation}
    \cbar+A^\top\pbar>\pbar,
    \label{eq:nominal_no_default}
\end{equation}
which implies the strictly positive nominal net-worth margin
\begin{equation}
    r\doteq\cbar+(A^\top-I)\pbar>0.
    \label{eq:r_def}
\end{equation}

We consider the uncertainty sets
\begin{equation}
\begin{aligned}
    \U_\infty(\epsilon)&=\{\delta\in\R^m:\ \norm{\delta}_\infty\le\epsilon\},\\
    \U_1(\epsilon)&=\{\delta\in\R^m:\ \norm{\delta}_1\le\epsilon\}.
\end{aligned}
\label{eq:uncertainty_sets}
\end{equation}
These two uncertainty sets encode different stress geometries. The $\ell_\infty$ set models diffuse simultaneous price moves across many assets, whereas the $\ell_1$ set captures concentrated stress with fixed total magnitude. Correspondingly, the bank-level exposure score becomes $\alpha_i=\norm{\sigma_i}_1$ in the $\ell_\infty$ case and $\alpha_i=\norm{\sigma_i}_\infty$ in the $\ell_1$ case (here and later,  $\sigma_i^\top$ denotes the $i$th row of $S$). This distinction will explain why diffuse and concentrated shocks induce markedly different buffer patterns.
For fixed $b$, define the worst-case loss
\begin{equation}
    J_p(\epsilon;b)\doteq
    \max_{\delta\in\U_p(\epsilon)}\eta^\star(\cbar+b+S\delta),
    \qquad p\in\{1,\infty\}.
    \label{eq:worst_case_loss_fixed_b}
\end{equation}
Our goal is to choose $b\in\Bset(B)$ to shape both resilience margins and worst-case losses.
We therefore separate two robust certificates: the default margin, which requires full clearing for all admissible shocks, and the insolvency margin of \cite{Calafiore2025}, which only requires feasibility of the clearing LP. The first is stricter, while the second still rules out infeasibility  of the one-period clearing problem.

\section{Resilience-Optimal Buffer Design}
\subsection{Default resilience margin}
Let $\sigma_i^\top$ be row $i$ of $S$, and let $\alpha_i\doteq\norm{\sigma_i}_*$, where $\norm{\cdot}_*$ is the dual norm associated with the chosen uncertainty set. When $\alpha_i=0$, we adopt the standard convention $(r_i+b_i)/\alpha_i=+\infty$.

\begin{proposition}
For any fixed $b\in\R^n_+$,
\begin{equation}
    \epsdef(b)=\min_{i=1,\dots,n}\frac{r_i+b_i}{\alpha_i}.
    \label{eq:epsdef_fixedb}
\end{equation}
Hence $\epsdef(b)$ is the largest shock radius for which every $\delta\in\U_p(\epsilon)$ yields full clearing $p=\pbar$.
\end{proposition}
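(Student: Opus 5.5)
The plan is to reduce the statement to a pointwise characterization of full clearing and then apply the dual-norm identity bank by bank. Fix $b\in\R^n_+$ and a realized shock $\delta$, so that $c=\cbar+b+S\delta$. The first step is to show that the clearing LP \eqref{eq:clearing_lp} returns $p=\pbar$ (equivalently $\eta^\star(c)=0$) if and only if $\pbar$ is feasible, i.e.\ $c+A^\top\pbar\ge\pbar$. Since the objective $\1^\top(\pbar-p)$ is nonnegative on the box $0\le p\le\pbar$ and vanishes only at $p=\pbar$, this equivalence is immediate. Using \eqref{eq:r_def}, the feasibility condition for $\pbar$ becomes componentwise
\[
r_i+b_i+\sigma_i^\top\delta\ge 0,\qquad i=1,\dots,n.
\]
Full clearing for every $\delta\in\U_p(\epsilon)$ is therefore equivalent to
\[
\min_{\delta\in\U_p(\epsilon)}\sigma_i^\top\delta\ \ge\ -(r_i+b_i)\quad\text{for all } i .
\]

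Next I would evaluate the inner minimum using the definition of the dual norm. Because $\U_p(\epsilon)$ is the $\epsilon$-ball of a norm whose dual is $\norm{\cdot}_*$ ($\ell_1$ for $\ell_\infty$ shocks and $\ell_\infty$ for $\ell_1$ shocks), we have $\min_{\delta\in\U_p(\epsilon)}\sigma_i^\top\delta=-\epsilon\norm{\sigma_i}_*=-\epsilon\alpha_i$. The minimum is attained by an explicit maximizer: $\delta=-\epsilon\,\mathrm{sign}(\sigma_i)$ in the $\ell_\infty$ case, and $\delta=-\epsilon\,\mathrm{sign}(\sigma_{ij^\star})e_{j^\star}$ with $j^\star\in\arg\max_j|\sigma_{ij}|$ in the $\ell_1$ case. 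Robust full clearing at radius $\epsilon$ thus holds if and only if $\epsilon\alpha_i\le r_i+b_i$ for every $i$.

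Finally, the constraints decouple across banks. Banks with $\alpha_i=0$ impose no restriction, because $r_i+b_i>0$ by \eqref{eq:r_def} and $b\ge0$; this matches the $+\infty$ convention. For the remaining banks the condition reads $\epsilon\le (r_i+b_i)/\alpha_i$. The set of certified radii is therefore exactly $[0,\min_i (r_i+b_i)/\alpha_i]$, whose supremum is the expression in \eqref{eq:epsdef_fixedb}. The set is closed, so the supremum is attained, and it is the largest certified radius.

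For maximality, at any $\epsilon$ above the minimum I would exhibit the violating shock: take the minimizing index $i$ and its extremal $\delta$ from the second step. This gives $r_i+b_i+\sigma_i^\top\delta<0$, so $\pbar$ is infeasible and full clearing fails. The main subtlety is confirming that the LP's optimal $p$ equals $\pbar$ exactly when $\pbar$ is feasible, which the sign of the objective settles. Everything else is routine.
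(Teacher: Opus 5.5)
Your proof is correct and complete. The paper gives no separate proof of this Proposition; it states it as inherited from \cite{Calafiore2025} under the shift $\cbar\mapsto\cbar+b$. Since $r=\cbar+(A^\top-I)\pbar$ depends on $\cbar$ with identity coefficient, that shift simply replaces $r$ by $r+b$. You instead derive the result from first principles: full clearing means $\pbar$ is feasible in \eqref{eq:clearing_lp}, i.e.\ $r+b+S\delta\ge0$, and each row's worst case $\min_{\delta\in\U_p(\epsilon)}\sigma_i^\top\delta=-\epsilon\alpha_i$ follows from the dual norm. The paper's route is a one-line reduction that depends entirely on the cited work. Yours is self-contained and makes explicit three things the paper leaves implicit. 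First, the robust condition decouples bank by bank; the worst-case shock may differ across banks without losing exactness, because the quantifier over $\delta$ is universal. Second, your explicit extremal shocks certify maximality. Third, the $\alpha_i=0$ convention is consistent precisely because $r>0$ and $b\ge0$. Two cosmetic fixes: the extremal $\delta$ in your second step is a minimizer, not a maximizer. And when every $\alpha_i=0$ the certified set is $[0,\infty)$, so ``largest'' should be read as a supremum equal to $+\infty$, not an attained maximum.
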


The design problem is
\begin{equation}
    \max_{b\in\Bset(B)}\ \epsdef(b).
    \label{eq:max_default_margin}
\end{equation}

\begin{theorem}
Let $r$ be given by \eqref{eq:r_def} and $\alpha_i=\norm{\sigma_i}_*$. The optimal value of \eqref{eq:max_default_margin} is the solution of the LP
\begin{equation}
\begin{aligned}
    \max_{\epsilon,\,b}\;&\epsilon\\
    \text{s.t. }&b_i-\alpha_i\epsilon\ge-r_i,\quad i=1,\dots,n,\\
    &b\ge0,\quad q^\top b\le B.
\end{aligned}
\label{eq:default_margin_lp}
\end{equation}
Moreover, for every target radius $\epsilon\ge0$, the minimal budget that guarantees $\epsdef(b)\ge\epsilon$ is
\begin{equation}
    B_{\mathrm{def}}(\epsilon)=\sum_{i=1}^n q_i\,[\alpha_i\epsilon-r_i]_+,
    \label{eq:min_budget_formula}
\end{equation}
with corresponding minimal buffer $b_i^\epsilon=[\alpha_i\epsilon-r_i]_+$.
\end{theorem}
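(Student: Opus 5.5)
The plan is to rely entirely on the preceding Proposition, which expresses $\epsdef(b)$ in closed form as $\min_i (r_i+b_i)/\alpha_i$. With that formula, both claims reduce to elementary manipulations of componentwise inequalities.

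For the LP characterization, I will use the standard epigraph reformulation of the max-min problem \eqref{eq:max_default_margin}. The key observation is that $\epsdef(b)\ge\epsilon$ holds if and only if $r_i+b_i\ge\alpha_i\epsilon$ for every $i$. When $\alpha_i>0$ this is immediate from the formula. When $\alpha_i=0$, the convention gives an infinite ratio, and the constraint $b_i\ge -r_i$ holds automatically because $r_i>0$ by \eqref{eq:r_def} and $b\ge0$. Hence a pair $(\epsilon,b)$ is feasible for \eqref{eq:default_margin_lp} exactly when $b\in\Bset(B)$ and $\epsilon\le\epsdef(b)$. Maximizing over the pair then gives $\sup_{b\in\Bset(B)}\epsdef(b)$.

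I then need to handle attainment and boundedness separately:
\begin{itemize}
\item If every $\alpha_i=0$, both problems are unbounded, with value $+\infty$, so the two values agree in the extended sense.
\item Otherwise $\Bset(B)$ is compact because $q\in\R^n_{++}$, and $\epsdef(b)$ is a finite continuous function of $b$ (a minimum of finitely many affine terms over the indices with $\alpha_i>0$). The maximum is therefore attained, and the LP optimum coincides with it.
\item Feasibility of the LP is clear: $b=0$, $\epsilon=0$ is feasible because $r>0$ and $B\ge0$.
\end{itemize}

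For the minimal-budget certificate, I fix $\epsilon\ge0$. By the equivalence above, $\epsdef(b)\ge\epsilon$ with $b\ge0$ holds if and only if $b_i\ge\max(0,\alpha_i\epsilon-r_i)=[\alpha_i\epsilon-r_i]_+$ for every $i$. So the feasible buffers are exactly those dominating $b^\epsilon$ componentwise. Since $q>0$, the cost $q^\top b$ is monotone in each coordinate, so $b^\epsilon$ is the unique minimizer and the minimum cost is $q^\top b^\epsilon=B_{\mathrm{def}}(\epsilon)$.

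There is no real obstacle in this argument. The only care needed is in the $\alpha_i=0$ convention and the possibly unbounded case in the first part.
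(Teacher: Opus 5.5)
Your proposal is correct and follows the same route as the paper. You use the closed form $\epsdef(b)=\min_i (r_i+b_i)/\alpha_i$ from the preceding Proposition to turn $\epsdef(b)\ge\epsilon$ into the componentwise constraints $r_i+b_i\ge\alpha_i\epsilon$, which yields the LP. You then pick the componentwise smallest admissible buffer $[\alpha_i\epsilon-r_i]_+$ to get the minimal budget. Your handling of the $\alpha_i=0$ convention, the unbounded all-zero-exposure case, and attainment is sound, and it makes explicit what the paper's short proof leaves implicit.
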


\begin{proof}
From \eqref{eq:epsdef_fixedb}, the condition $\epsdef(b)\ge\epsilon$ is equivalent to $r_i+b_i\ge\alpha_i\epsilon$ for all $i$, which gives \eqref{eq:default_margin_lp}. For fixed $\epsilon$, the least expensive feasible choice is obtained componentwise by selecting the smallest admissible $b_i$, namely $b_i^\epsilon=[\alpha_i\epsilon-r_i]_+$. Summing the costs gives \eqref{eq:min_budget_formula}.
\end{proof}

\begin{remark}
The certificate \eqref{eq:min_budget_formula} has a simple top-up interpretation. To guarantee a target radius $\epsilon$, bank $i$ receives support only when its nominal margin $r_i$ falls short of the required stress buffer $\alpha_i\epsilon$. Thus the optimal design is not merely exposure-proportional: it is exposure-adjusted and balance-sheet-aware, and it concentrates spending on the smallest margin-to-exposure ratios.
\end{remark}

\begin{corollary}
The optimal resilience value $\epsdef^\star(B)=\max_{b\in\Bset(B)}\epsdef(b)$ is nondecreasing, concave, and piecewise-affine in $B$.
\end{corollary}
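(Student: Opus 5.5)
We view $\epsdef^\star(B)$ as the optimal value of the parametric LP \eqref{eq:default_margin_lp}, in which $B$ enters only the right-hand side. We then read off the three properties from standard parametric-LP facts, backed by the explicit inverse map $B_{\mathrm{def}}$ from the Theorem.

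\emph{Monotonicity.} If $B\le B'$, then $\Bset(B)\subseteq\Bset(B')$, so the maximum over the larger set can only be larger.

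\emph{Concavity.} Take $B_1,B_2\ge0$ and $\theta\in[0,1]$, and let $(\epsilon_k,b_k)$ be feasible for \eqref{eq:default_margin_lp} at budget $B_k$, with $\epsilon_k$ arbitrarily close to the optimum, or optimal if the maximum is attained. Every constraint of \eqref{eq:default_margin_lp} is jointly affine in $(\epsilon,b,B)$. Hence the convex combination $(\theta\epsilon_1+(1-\theta)\epsilon_2,\ \theta b_1+(1-\theta)b_2)$ is feasible at budget $\theta B_1+(1-\theta)B_2$, and this gives concavity. One subtlety: if some $\alpha_i=0$, that row imposes no constraint. If all $\alpha_i=0$, then $\epsdef^\star\equiv+\infty$ and the claim holds trivially in the extended sense. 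Otherwise the feasible $\epsilon$ are bounded by $\max_{i:\alpha_i>0}(r_i+B/q_i)/\alpha_i$, so the value is finite and the LP attains its optimum.

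\emph{Piecewise-affinity.} This is the main point, and I would prove it via the Theorem rather than through LP duality. By \eqref{eq:min_budget_formula}, $\epsdef^\star(B)\ge\epsilon$ holds iff $B_{\mathrm{def}}(\epsilon)\le B$, so
\[
\epsdef^\star(B)=\max\{\epsilon\ge0:\ B_{\mathrm{def}}(\epsilon)\le B\}.
\]
The function $B_{\mathrm{def}}(\epsilon)=\sum_i q_i[\alpha_i\epsilon-r_i]_+$ is convex and piecewise affine, with breakpoints at $\epsilon=r_i/\alpha_i$. It vanishes on $[0,\epsilon_0]$, where $\epsilon_0=\min_{i:\alpha_i>0}r_i/\alpha_i>0$ because $r>0$. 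Beyond $\epsilon_0$ it is strictly increasing with slopes $\sum_{i:\alpha_i\epsilon>r_i}q_i\alpha_i>0$. Therefore, on $B\ge0$, $\epsdef^\star$ is its inverse on $[\epsilon_0,\infty)$. This inverse of a continuous, strictly increasing, convex, piecewise-affine function is itself increasing, concave and piecewise affine. Its breakpoints sit at $B=B_{\mathrm{def}}(r_i/\alpha_i)$, and its slopes are $1/\sum_{i:\alpha_i\epsilon>r_i}q_i\alpha_i$, which decrease as more banks become active. This recovers concavity independently. Note also that $\epsdef^\star(0)=\epsilon_0$, so the function is strictly increasing for $B\ge0$, which is stronger than nondecreasing.

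The only delicate step is this inversion. One must check that $B_{\mathrm{def}}$ is strictly increasing past $\epsilon_0$ (this uses $q>0$ and $\alpha_i>0$ for the active indices), so that the $\max$ is attained exactly where $B_{\mathrm{def}}(\epsilon)=B$. One must also handle the degenerate all-$\alpha_i=0$ case separately.
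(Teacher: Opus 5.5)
Your proof is correct, and its central step, writing $\epsdef^\star(B)=\max\{\epsilon\ge0:\ B_{\mathrm{def}}(\epsilon)\le B\}$ and inverting the convex, piecewise-affine map $B_{\mathrm{def}}$ (the paper's $H$) on its strictly increasing range, is exactly the paper's argument. Your set-inclusion and convex-combination proofs of monotonicity and concavity, together with your explicit handling of the flat segment $[0,\epsilon_0]$, of attainment, and of the all-$\alpha_i=0$ case, add rigor that the paper's one-line proof leaves implicit.
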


\begin{proof}
The map $H(\epsilon)=\sum_i q_i[\alpha_i\epsilon-r_i]_+$ is nondecreasing, convex, and piecewise-affine. Since $\epsdef^\star(B)=\max\{\epsilon:H(\epsilon)\le B\}$, it is the inverse of $H$ on its increasing range.
\end{proof}

\subsection{Insolvency resilience margin}
Applying Theorem~3 of \cite{Calafiore2025} to the shifted net inflow $\cbar+b$ yields the design LP below.

\begin{theorem}
Let $s\in\R^n_+$ be defined by $s_i=\norm{\sigma_i}_*$. The largest insolvency resilience margin achievable with budget $B$ is the optimal value of
\begin{equation}
\begin{aligned}
    \max_{\epsilon,\,p,\,b}\;&\epsilon\\
    \text{s.t. }&0\le p\le\pbar,\\
    &\cbar+b-\epsilon s+A^\top p\ge p,\\
    &b\ge0,\quad q^\top b\le B.
\end{aligned}
\label{eq:epsub_design_lp}
\end{equation}
\end{theorem}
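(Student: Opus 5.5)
The plan is to reduce the design problem to the fixed-buffer insolvency margin of \cite{Calafiore2025} and then merge the outer maximization over $b$ with the inner maximization over $(\epsilon,p)$. The insolvency margin is defined as the largest radius $\epsilon$ such that the clearing LP \eqref{eq:clearing_lp} is feasible for every $c=\cbar+b+S\delta$ with $\delta\in\U_p(\epsilon)$. For fixed $b$, call this $\epsub(b)$. The design problem is then $\max_{b\in\Bset(B)}\epsub(b)$.

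\textbf{Step 1: worst-case shock.} For fixed $b$ and $\epsilon$, I will show that robust feasibility over $\U_p(\epsilon)$ is equivalent to feasibility of the clearing constraints at the single inflow vector $\cbar+b-\epsilon s$. The feasibility constraint $c+A^\top p\ge p$ is monotone in $c$: a larger $c$ preserves feasibility of any given $p$.

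\emph{Sufficiency.} For any admissible $\delta$ and each $i$, Hölder's inequality gives $\sigma_i^\top\delta\ge-\norm{\sigma_i}_*\epsilon=-\epsilon s_i$. Hence $c\ge \cbar+b-\epsilon s$ componentwise. Any $p$ feasible at the lower vector is therefore feasible for every shock.

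\emph{Necessity.} This is the delicate point, because the minimizing $\delta$ differs across rows $i$, so the lower bound is not attained by a single shock. I will not prove this direction directly. Instead I will invoke Theorem~3 of \cite{Calafiore2025}, applied with nominal inflow $\cbar+b$. That theorem states precisely that the insolvency margin equals the optimal value of $\max\{\epsilon: 0\le p\le\pbar,\ \cbar+b-\epsilon s+A^\top p\ge p\}$. The substitution is legitimate because $b$ enters \eqref{eq:realized_c} only as an additive deterministic shift of $\cbar$. The hypotheses of that theorem should survive the shift, since $b\ge0$ preserves \eqref{eq:nominal_no_default}: $\cbar+b+A^\top\pbar>\pbar$. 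So for each fixed $b$,
\[
\epsub(b)=\max_{\epsilon,p}\{\epsilon:\ 0\le p\le\pbar,\ \cbar+b-\epsilon s+A^\top p\ge p\}.
\]

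\textbf{Step 2: merging the maximizations.} Taking $\max_{b\in\Bset(B)}$ of a parametric maximum equals the joint maximum over $(\epsilon,p,b)$ with all constraints stacked. This holds because the sup-of-sup identity $\sup_b\sup_{(\epsilon,p)}=\sup_{(\epsilon,p,b)}$ is valid without any convexity assumptions. The resulting problem is exactly \eqref{eq:epsub_design_lp}.

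The key structural observation is that the only coupling between the variables, $b-\epsilon s$, is linear. So no bilinear term arises, and the joint problem is an LP. It is feasible: $\epsilon=0$, $p=\pbar$, $b=0$ satisfies all constraints, using $r>0$ from \eqref{eq:r_def}. It is bounded above whenever $s\neq0$, because $b$ is bounded by the budget and $p$ is bounded by $\pbar$. Hence the maximum is attained.

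\textbf{Main obstacle.} The main step to verify is that the equivalence cited in Step 1 is exact, i.e., that the inner worst case over $\delta$ commutes with the existence of $p$. I plan to rely on the cited Theorem~3 rather than reprove it. If that result were stated only for $\cbar$ satisfying extra conditions, I would check those conditions for $\cbar+b$ with $b\ge0$, using monotonicity in $c$.
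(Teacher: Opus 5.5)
Your route is the paper's proof: apply Theorem~3 of \cite{Calafiore2025} with $\cbar$ replaced by $\cbar+b$, then merge $\max_{b\in\Bset(B)}$ with the inner maximization. Your H\"older sufficiency argument, the sup--sup merge, and the feasibility/boundedness check are all fine. The trouble is the step that you and the paper both outsource to the citation.

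Read the insolvency margin as you and the paper's text do: the largest $\epsilon$ such that \eqref{eq:clearing_lp} is feasible for every $\delta\in\U_p(\epsilon)$. Under that reading, the necessity direction is false in the generality of this setup. Checking hypotheses on $\cbar+b$ cannot repair it, because the obstruction lies in $S$ and in the norm. Take two banks that each owe the other $1$, with $\cbar=(1,1)$, so $r=(1,1)>0$. Writing $d=p_1-p_2\in[-1,1]$, the clearing LP is feasible at $c$ iff $c_1+c_2\ge0$, $c_1\ge-1$ and $c_2\ge-1$. Consider either $m=1$ with $S=(1,-1)^\top$ (either norm), or $S=I_2$ under $\ell_1$. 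In both cases $s=(1,1)$, so already at $B=0$ the LP \eqref{eq:epsub_design_lp} returns $\epsilon=1$: summing the two rows at $\cbar-\epsilon s$ forces $2-2\epsilon\ge0$. Yet $\cbar+S\delta$ satisfies all three inequalities for every admissible $\delta$ up to $\epsilon=2$. Your own diagnosis, that no single $\delta$ attains the row-wise H\"older bounds simultaneously, is exactly what breaks. In the $S=I_2$ instance the statement even conflicts with the paper's $\ell_1$ loss theorem, whose per-asset blocks give $J_1(\epsilon;0)<\infty$ up to $\epsilon=2$.

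What your Step~1 does establish is that the optimal value of \eqref{eq:epsub_design_lp} is a certified lower bound on the best insolvency margin attainable with budget $B$. Equality needs extra structure.
Under $\ell_\infty$, suppose each column $k$ of $S$ has entries of a single sign $\theta_k\in\{\pm1\}$. Then the single shock $\delta=-\epsilon\theta$ gives $\sigma_i^\top\delta=-\epsilon\norm{\sigma_i}_1$ in every row at once, which yields necessity.
Under $\ell_1$, the set of inflows $c$ for which \eqref{eq:clearing_lp} is feasible is convex and upward closed. Robust feasibility therefore reduces to the vertices $\delta=\pm\epsilon e_k$. For sign-consistent columns, the exact design LP then needs one block $p^{(k)}$ per asset with inflow $\cbar+b-\epsilon|\zeta_k|$, as in the $\ell_1$ loss theorem, rather than the single vector $s$ with $s_i=\norm{\sigma_i}_\infty$.
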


\begin{proof}
For fixed $b$, Theorem~3 in \cite{Calafiore2025} states that the insolvency resilience margin is the optimal value of the feasibility LP obtained by replacing $\cbar$ with $\cbar+b$. Optimizing over all $b\in\Bset(B)$ gives \eqref{eq:epsub_design_lp}.
\end{proof}

\begin{remark}
For every admissible buffer $b$, one has $\epsdef(b)\le \epsub(b)$, since full clearing is a feasible point of \eqref{eq:clearing_lp}. The gap between the two margins therefore quantifies a regime in which the system can remain feasible even though some defaults may already occur.
\end{remark}

\section{Loss-Optimal Buffer Design}
We now minimize the worst-case systemic loss at a prescribed radius $\epsilon$. The key point is that the problem remains exactly tractable after adding the buffer decision.

\subsection{$\ell_\infty$-bounded perturbations}
For $\U_\infty(\epsilon)$ define the bank-level stress vector
\begin{equation}
    s^\infty\doteq|S|\1,
    \qquad
    s_i^\infty=\sum_{j=1}^m|s_{ij}|=\norm{\sigma_i}_1.
    \label{eq:sinf_def}
\end{equation}

\begin{theorem}
Let
\begin{equation}
    J_\infty(\epsilon,B)\doteq\min_{b\in\Bset(B)}J_\infty(\epsilon;b).
    \label{eq:Jinf_budget_def}
\end{equation}
Then $J_\infty(\epsilon,B)$ is the optimal value of the LP
\begin{equation}
\begin{aligned}
    \min_{p,\,b}\;&\1^\top(\pbar-p)\\
    \text{s.t. }&0\le p\le\pbar,\\
    &\cbar+b-\epsilon s^\infty+A^\top p\ge p,\\
    &b\ge0,\quad q^\top b\le B.
\end{aligned}
\label{eq:Jinf_design_lp}
\end{equation}
If \eqref{eq:Jinf_design_lp} is infeasible, then $J_\infty(\epsilon,B)=+\infty$.
\end{theorem}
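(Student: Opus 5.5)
The plan is to show first that, for fixed $b$, the worst case over $\U_\infty(\epsilon)$ is attained at the single shock that lowers every bank's inflow by its full exposure. Then the buffer design collapses into one minimization. The argument rests on monotonicity of $\eta^\star$.

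\textbf{Step 1: monotonicity.} If $c\le c'$ componentwise, then $\eta^\star(c)\ge\eta^\star(c')$. Indeed, any $p$ feasible for \eqref{eq:clearing_lp} with inflow $c$ satisfies $c'+A^\top p\ge c+A^\top p\ge p$, so it is also feasible for $c'$. The feasible set therefore grows with $c$ and the minimum cannot increase. The convention $\eta^\star=+\infty$ for infeasible problems is consistent with this ordering.

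\textbf{Step 2: worst-case shock for fixed $b$.} For every $\delta\in\U_\infty(\epsilon)$ and every $i$, we have $\sigma_i^\top\delta\ge-\norm{\sigma_i}_1\norm{\delta}_\infty\ge-\epsilon s_i^\infty$. Hence
\[
\cbar+b+S\delta\ \ge\ \cbar+b-\epsilon s^\infty .
\]
The key observation is that this lower bound is attained simultaneously for all rows. The single shock $\delta_j^\circ=-\epsilon$ for every $j$ would not suffice for mixed-sign positions, so we need a different choice. Because $\eta^\star$ depends on $c$ through the linear constraint $c+A^\top p\ge p$, it is cleaner to argue directly as follows. The claim is
\[
J_\infty(\epsilon;b)=\min_p\{\1^\top(\pbar-p): 0\le p\le\pbar,\ \cbar+b-\epsilon s^\infty+A^\top p\ge p\}.
\]
The inequality ``$\le$'' follows from Step 1 and the bound above. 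For ``$\ge$'', I would invoke the fixed-$b$ robust loss result of \cite{Calafiore2025} (its $\ell_\infty$ worst-case loss theorem) applied with $\cbar$ replaced by $\cbar+b$, exactly as the proof of the insolvency-design theorem did. That result states precisely that the $\ell_\infty$ worst-case loss equals the LP with the row-wise robustified constraint $\cbar-\epsilon s^\infty+A^\top p\ge p$.

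This step is the main obstacle if one does not lean on \cite{Calafiore2025}. With heterogeneous sign patterns across rows, no single $\delta$ attains all row minima at once. Equality then comes from the robust counterpart: the worst case of a maximum over $\delta$ of a minimum over $p$ is interchanged via the row-wise robust constraint. That interchange is the content of the cited theorem. If I had to argue it independently, I would use the LP dual of \eqref{eq:clearing_lp} together with concavity in $\delta$ of the dual objective. However, the function $\max_\delta\eta^\star$ is a maximum of a convex function, so the argument has to go through the cited result.

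\textbf{Step 3: outer minimization.} Substituting the Step 2 expression into \eqref{eq:Jinf_budget_def} gives a minimum over $b\in\Bset(B)$ of a minimum over $p$. Merging the two into a joint minimization over $(p,b)$ yields \eqref{eq:Jinf_design_lp}, since $b$ enters the constraint linearly. If the joint LP is infeasible, then for every admissible $b$ the inner LP is infeasible, so each $J_\infty(\epsilon;b)=+\infty$ and their minimum is $+\infty$.
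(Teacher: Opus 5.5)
Your Step~1 and the resulting ``$\le$'' inequality are correct, and your Step~3 is the same merging step the paper performs. For ``$\ge$'' you defer to the fixed-$b$ formula of \cite{Calafiore2025}, exactly as the paper's proof does when it imports a dual containing the term $\epsilon(s^\infty)^\top\lambda$. That step is not merely unproved, though. With long and short positions allowed, as the model states, it is false, so the paper's proof has the same problem. The obstruction is the one you spotted yourself, and no max--min interchange rescues it. The row-wise robustified LP requires a single $p$ that clears against every $\delta\in\U_\infty(\epsilon)$. By contrast, $J_\infty(\epsilon;b)$ lets the clearing vector adapt to the realized $\delta$, so in general only $\max_\delta\min_p\le\min_p\max_\delta$ holds. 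Concretely, exchanging the maximization over $\delta$ with the LP dual of the clearing problem gives
\[
J_\infty(\epsilon;b)=\max_{\beta,\lambda\ge0}\bigl\{(\1-\beta)^\top\pbar-(\cbar+b)^\top\lambda+\epsilon\norm{S^\top\lambda}_1:\ \beta-\1+(I-A)\lambda\ge0\bigr\}.
\]
For $\lambda\ge0$ one only has $\norm{S^\top\lambda}_1\le(s^\infty)^\top\lambda$, with equality iff in every column $j$ the nonzero products $s_{ij}\lambda_i$ share a sign.

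A two-bank example makes the failure explicit. Let the banks owe each other one unit, so $\pbar=(1,1)^\top$ and $A=\bigl(\begin{smallmatrix}0&1\\1&0\end{smallmatrix}\bigr)$. Take $\cbar=(1,1)^\top$ (so $r>0$), one asset with $S=(1,-1)^\top$, and $B=0$. The clearing constraints $1+\delta+p_2\ge p_1$ and $1-\delta+p_1\ge p_2$ give $\eta^\star(\cbar+S\delta)=[\,|\delta|-1\,]_+$ for $|\delta|\le2$, so $J_\infty(3/2,0)=1/2$. In the design LP, however, $\cbar-\tfrac32 s^\infty=(-\tfrac12,-\tfrac12)^\top$, and adding the two clearing constraints gives $-1\ge0$. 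The LP is therefore infeasible, which refutes both the value identity and the infeasibility clause.

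What your argument does establish in general is that the LP value is an upper bound on $J_\infty(\epsilon,B)$. Equality needs an extra hypothesis, such as sign-uniform columns of $S$ (for example, long-only holdings). Under that hypothesis, let $\theta_j\in\{-1,1\}$ be the common sign of column $j$ and set $\delta^\circ_j=-\epsilon\theta_j$. Then $S\delta^\circ=-\epsilon s^\infty$ exactly, and your Step~1 yields ``$\ge$'' at once. That is a complete proof, more elementary than the paper's double dualization and needing no citation.

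For general signs, the convexity of $\eta^\star$ that you mention gives the correct result. The worst case sits at one of the $2^m$ vertices of the box. The budgeted problem is then an exact LP with one clearing block per vertex, in the same epigraph form as the paper's $\ell_1$ theorem, rather than the single-block LP in the statement.
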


\begin{proof}
For fixed $b$, Eq.~(34) of \cite{Calafiore2025} writes the worst-case loss as
\begin{equation}
\begin{aligned}
J_\infty(\epsilon;b)=\max_{\beta,\lambda\ge0}\;&
(\1-\beta)^\top\pbar-(\cbar+b)^\top\lambda
+\epsilon(s^\infty)^\top\lambda\\
\text{s.t. }&\beta-\1+(I-A)\lambda\ge0.
\end{aligned}
\label{eq:dual_inf_fixedb}
\end{equation}
Dualizing \eqref{eq:dual_inf_fixedb} with dual variable $p\ge0$ gives \eqref{eq:Jinf_design_lp} for fixed $b$; minimizing over $b\in\Bset(B)$ gives the stated LP. Infeasibility means that some admissible shock in $\U_\infty(\epsilon)$ causes insolvency toward the external sector for every admissible buffer.
\end{proof}

\subsection{$\ell_1$-bounded perturbations}
For $k=1,\dots,m$, let $\zeta_k^\top$ denote row $k$ of $S^\top$, so $|\zeta_k|\in\R^n_+$ is the cross-sectional exposure profile to asset $k$.

\begin{theorem}
Let
\begin{equation}
    J_1(\epsilon,B)\doteq\min_{b\in\Bset(B)}J_1(\epsilon;b).
    \label{eq:J1_budget_def}
\end{equation}
Then $J_1(\epsilon,B)$ is the optimal value of
\begin{equation}
\begin{aligned}
    \min_{t,\,b,\,p^{(1)},\dots,p^{(m)}}\;&t\\
    \text{s.t. }&t\ge\1^\top(\pbar-p^{(k)}),\quad k=1,\dots,m,\\
    &0\le p^{(k)}\le\pbar,\quad k=1,\dots,m,\\
    &\cbar+b-\epsilon|\zeta_k|+A^\top p^{(k)}\ge p^{(k)},\ k=1,\dots,m,\\
    &b\ge0,\quad q^\top b\le B.
\end{aligned}
\label{eq:J1_design_lp}
\end{equation}
If \eqref{eq:J1_design_lp} is infeasible, then $J_1(\epsilon,B)=+\infty$.
\end{theorem}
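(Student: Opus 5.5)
The plan is to fix $b$ first and show that $J_1(\epsilon;b)$ equals the inner part of \eqref{eq:J1_design_lp}:
\[
\max_{k=1,\dots,m}\ \min_{p^{(k)}}\big\{\1^\top(\pbar-p^{(k)}):\ 0\le p^{(k)}\le\pbar,\ \cbar+b-\epsilon|\zeta_k|+A^\top p^{(k)}\ge p^{(k)}\big\},
\]
with each inner minimum set to $+\infty$ when infeasible. Once this holds for every fixed $b$, we minimize over $b\in\Bset(B)$. The max over $k$ is written in epigraph form with $t$. The joint minimization over $(t,b,p^{(1)},\dots,p^{(m)})$ is then exactly \eqref{eq:J1_design_lp}, since the blocks $p^{(k)}$ are coupled only through the shared $b$ and $t$. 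Infeasibility of the joint LP means that for every admissible $b$ some block is infeasible, which gives $J_1(\epsilon,B)=+\infty$.

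\textbf{Step 1: reduce the inner maximization over $\delta$ to a concave maximization over $c$.} The map $c\mapsto\eta^\star(c)$ is convex and nonincreasing on $\R^n$ (extended-valued). It is the optimal value of an LP whose right-hand side depends affinely on $c$, so its epigraph is the projection of a polyhedron. It is nonincreasing because larger $c$ enlarges the feasible set. Hence $\delta\mapsto\eta^\star(\cbar+b+S\delta)$ is convex in $\delta$. A convex function on the polytope $\U_1(\epsilon)$ attains its maximum at a vertex $\pm\epsilon e_k$. This gives
\[
J_1(\epsilon;b)=\max_{k}\max_{\pm}\eta^\star(\cbar+b\pm\epsilon\zeta_k).
\]

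\textbf{Step 2: remove the sign choice, which is the main obstacle.} The vertex enumeration gives $2m$ scenarios with signed shifts $\pm\epsilon\zeta_k$, whereas the statement uses the single componentwise shift $-\epsilon|\zeta_k|$. Monotonicity gives one direction immediately: $\cbar+b-\epsilon|\zeta_k|\le\cbar+b\pm\epsilon\zeta_k$, so $\eta^\star(\cbar+b-\epsilon|\zeta_k|)\ge\eta^\star(\cbar+b\pm\epsilon\zeta_k)$, and the LP value is an upper bound on $J_1(\epsilon;b)$. The reverse inequality does not follow from this vertex argument when $\zeta_k$ has mixed signs. To close it I would follow the paper's convention from the $\ell_\infty$ case. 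There, $s^\infty=|S|\1$ is used as the worst-case stress, which implicitly means that each bank's exposure to each asset may be hit adversarially with its own sign. This is the componentwise worst case used in Eq.~(34) of \cite{Calafiore2025}. So I would invoke the $\ell_1$ analogue of that fixed-$b$ dual representation from \cite{Calafiore2025}. There the worst case over $\U_1(\epsilon)$ of the dual objective term $\epsilon\lambda^\top S\delta$ becomes $\epsilon\max_k|\zeta_k|^\top\lambda$ for $\lambda\ge0$, under the same bank-wise sign convention. This yields $\max_k$ over the dual problems \eqref{eq:dual_inf_fixedb} with $s^\infty$ replaced by $|\zeta_k|$. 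If that convention is not available, the honest statement is that \eqref{eq:J1_design_lp} is an exact value when all $\zeta_k$ are sign-definite and a conservative upper bound in general. I would flag this point explicitly.

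\textbf{Step 3: dualize each scenario back to the primal form.} For each $k$, dualize the dual problem with variable $p^{(k)}\ge0$, exactly as in the proof for the $\ell_\infty$ case. This recovers the primal clearing LP with inflow $\cbar+b-\epsilon|\zeta_k|$. Strong duality holds whenever the primal is feasible; when it is infeasible, the dual is unbounded and both sides equal $+\infty$. The statement then follows from the epigraph-and-minimize argument given at the start.
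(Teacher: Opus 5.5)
Your route is the paper's: fix $b$, write $J_1(\epsilon;b)$ as a maximum over $k$ of dual LPs containing $\epsilon|\zeta_k|^\top\lambda$ (the paper takes this directly from Eq.~(31) of \cite{Calafiore2025}), dualize each block back to a clearing LP with inflow $\cbar+b-\epsilon|\zeta_k|$, add the epigraph variable $t$, and minimize over $b\in\Bset(B)$. Your Step~3, your handling of infeasibility (the dual is always feasible, e.g.\ $\beta=\1$, $\lambda=0$), and the outer epigraph argument are correct.

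The caveat in your Step~2 is also correct, and it exposes a gap in the statement itself. The paper's proof does not address it; it only cites the $|\zeta_k|$ form. With $c=\cbar+b+S\delta$ and mixed-sign positions allowed, one has $\max_{\delta\in\U_1(\epsilon)}(-\lambda^\top S\delta)=\epsilon\max_k|\zeta_k^\top\lambda|$. For $\lambda\ge0$ this can be strictly below $\epsilon\max_k|\zeta_k|^\top\lambda$.

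Here is a concrete counterexample. Take $n=2$, $m=1$, $\bar p_{12}=\bar p_{21}=1$, $\cbar=(1,1)^\top$, $S=(1,-1)^\top$, $B=0$, $\epsilon=\tfrac32$. Then $r=(1,1)^\top>0$. The LP block with inflow $\cbar-\epsilon|\zeta_1|=(-\tfrac12,-\tfrac12)^\top$ requires $p_1-p_2\le-\tfrac12$ and $p_2-p_1\le-\tfrac12$, so it is infeasible. Yet for every $\delta\in[-\tfrac32,\tfrac32]$ the clearing LP is feasible with loss $[\,|\delta|-1\,]_+$. Hence $J_1(\tfrac32,0)=\tfrac12<+\infty$, contradicting both claims of the statement.

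So the ``bank-wise sign convention'' cannot be borrowed to close the argument. It amounts to a larger perturbation set than $\{S\delta:\delta\in\U_1(\epsilon)\}$. The $\ell_\infty$ case with $s^\infty=|S|\1$ has the same issue, so it gives no independent support.

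What your Steps~1 and~3 do prove, with no appeal to \cite{Calafiore2025}, is the exact result. $J_1(\epsilon,B)$ equals the value of the analogous LP with $2m$ blocks and inflows $\cbar+b\pm\epsilon\zeta_k$, which is still linear in $m$. By your monotonicity observation, this coincides with the stated $m$-block LP whenever every column $\zeta_k$ is sign-definite, e.g.\ for long-only portfolios. Otherwise the stated LP is only an upper bound, exactly as you say.
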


\begin{proof}
For fixed $b$, Eq.~(31) of \cite{Calafiore2025} yields
\begin{equation}
J_1(\epsilon;b)=\max_{k=1,\dots,m}\ \max_{\beta,\lambda\ge0}
\Big[(\1-\beta)^\top\pbar-(\cbar+b)^\top\lambda
+\epsilon|\zeta_k|^\top\lambda\Big]
\label{eq:J1_dual_compact}
\end{equation}
subject to $\beta-\1+(I-A)\lambda\ge0$. For each fixed $k$, dualizing the inner LP gives
\begin{equation}
\begin{aligned}
\phi_k(\epsilon;b)=\min_{p^{(k)}}\;&\1^\top(\pbar-p^{(k)})\\
\text{s.t. }&0\le p^{(k)}\le\pbar,\\
&\cbar+b-\epsilon|\zeta_k|+A^\top p^{(k)}\ge p^{(k)}.
\end{aligned}
\label{eq:phi_k_lp}
\end{equation}
Hence $J_1(\epsilon;b)=\max_k\phi_k(\epsilon;b)$. Introducing the epigraph variable $t$ and minimizing over $b\in\Bset(B)$ gives \eqref{eq:J1_design_lp}; infeasibility means that at least one asset-scenario block is infeasible for every admissible buffer.
\end{proof}

\begin{remark}
The tractability gap between the two uncertainty models is mild. Under $\ell_\infty$ uncertainty, robustness collapses to the single stress vector $s^\infty$ in \eqref{eq:sinf_def}. Under $\ell_1$ uncertainty, exact robustness adds only one LP block per asset because the worst sign is already absorbed by $|\zeta_k|$. Hence the complexity growth is linear in the asset dimension $m$.
\end{remark}

\begin{corollary}
For fixed $\epsilon$, the value functions $J_\infty(\epsilon,B)$ and $J_1(\epsilon,B)$ are nonincreasing, convex, and piecewise-affine in $B$ on their feasible domain.
\end{corollary}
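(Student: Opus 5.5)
The plan is to view both value functions as optimal values of parametric LPs in which $B$ enters only the right-hand side of one constraint. Then the three properties follow from general LP sensitivity facts. By the two preceding theorems, $J_\infty(\epsilon,B)$ is the optimal value of \eqref{eq:Jinf_design_lp} and $J_1(\epsilon,B)$ is the optimal value of \eqref{eq:J1_design_lp}. In both LPs, for fixed $\epsilon$, the data $\pbar$, $\cbar$, $A$, $s^\infty$ and $|\zeta_k|$ are constant. The scalar $B$ appears only in the budget row $q^\top b\le B$. I will argue once for a generic LP
\[
V(B)=\min_x\{c^\top x:\ Gx\le h+Be\}
\]
with a fixed unit vector $e$, and then specialize. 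Here $x$ is $(p,b)$ or $(t,b,p^{(1)},\dots,p^{(m)})$.

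\textbf{Monotonicity.} If $B\le B'$, any point feasible at $B$ is feasible at $B'$, so $V(B')\le V(B)$. Hence the feasible domain $\{B: V(B)<+\infty\}$ is an up-closed interval $[B_0,\infty)$, or the empty set.

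\textbf{Convexity.} Take feasible points $x,x'$ at $B,B'$ and $\theta\in[0,1]$. The point $\theta x+(1-\theta)x'$ is feasible at $\theta B+(1-\theta)B'$ because the constraints are linear. Its cost is $\theta c^\top x+(1-\theta)c^\top x'$. Taking infima gives convexity on the domain.

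I also need the values to be finite, not $-\infty$. This holds because the objectives are bounded below by $0$: $p\le\pbar$ gives $\1^\top(\pbar-p)\ge 0$, and $t\ge \1^\top(\pbar-p^{(k)})\ge0$. The feasible sets are in fact bounded, since $0\le b_i\le B/q_i$ with $q>0$, $0\le p\le\pbar$, and $t$ may be capped without loss. So the minimum is attained whenever the LP is feasible.

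\textbf{Piecewise affinity.} This is the only step needing more than the definition. I will use LP strong duality. For $B$ in the feasible domain, the optimum is finite, so
\[
V(B)=\max_{y\ge0}\{-(h+Be)^\top y:\ G^\top y=-c\}.
\]
The dual feasible set $D$ does not depend on $B$, and it is nonempty because the primal optimum is finite. A linear function over a polyhedron attains its maximum at a vertex whenever the maximum is finite and the polyhedron has vertices. Here $D\subseteq\R_+^{\cdot}$ contains no lines, so it has finitely many vertices $y^1,\dots,y^N$. Therefore
\[
V(B)=\max_{j}\bigl(-h^\top y^j-B\,e^\top y^j\bigr)
\]
on the domain. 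This is a finite maximum of affine functions of $B$, hence convex and piecewise-affine, which also re-confirms convexity.

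A subtle point is that an extreme ray of $D$ could make the dual unbounded for some $B$. That happens only where the primal is infeasible, so it does not affect the feasible domain. Finally, each slope $-e^\top y^j=-y^j_{\text{budget}}\le0$, consistent with the monotonicity shown above.

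Specializing to \eqref{eq:Jinf_design_lp} and \eqref{eq:J1_design_lp} completes the argument. I expect no real obstacle beyond carefully justifying finiteness and the vertex representation on the feasible domain.
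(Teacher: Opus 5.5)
Your proof is correct and follows the same route as the paper, which simply notes that \eqref{eq:Jinf_design_lp} and \eqref{eq:J1_design_lp} are parametric LPs with $B$ appearing only in the right-hand side of the budget row and then invokes standard parametric LP theory. You spell that theory out explicitly: nested feasible sets give monotonicity, convex combinations give convexity, and a finite maximum over the vertices of the $B$-independent dual polyhedron gives piecewise affinity with nonpositive slopes. This is a faithful, self-contained expansion of the paper's argument rather than a different one.
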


\begin{proof}
Both \eqref{eq:Jinf_design_lp} and \eqref{eq:J1_design_lp} are parametric LPs with scalar parameter $B$ only in the right-hand side of the budget inequality. Standard parametric LP theory yields the claim \cite{Boyd2004,Bertsimas2011}.
\end{proof}

\begin{corollary}
For either uncertainty set $\U_p(\epsilon)$, the optimized worst-case loss satisfies
\begin{equation}
    J_p(\epsilon,B)=0\iff B\ge B_{\mathrm{def}}(\epsilon),
    \label{eq:zero_loss_budget}
\end{equation}
where $B_{\mathrm{def}}(\epsilon)$ is given by \eqref{eq:min_budget_formula} with the dual norm corresponding to $p\in\{1,\infty\}$.
\label{cor:zero_loss_budget}
\end{corollary}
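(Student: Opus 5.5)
The plan is to show that, for a fixed buffer $b$, zero worst-case loss is the same as default robustness at radius $\epsilon$:
\[
J_p(\epsilon;b)=0 \iff \epsdef(b)\ge\epsilon .
\]
Once this pointwise equivalence holds, the corollary follows by optimizing over $b\in\Bset(B)$. Since $\eta^\star\ge0$ whenever it is finite, $J_p(\epsilon;b)=0$ means exactly that $\eta^\star(\cbar+b+S\delta)=0$ for every $\delta\in\U_p(\epsilon)$. That is, the clearing LP \eqref{eq:clearing_lp} at every admissible shock is feasible with optimal $p=\pbar$. This is full clearing for all $\delta\in\U_p(\epsilon)$, which by the Proposition on the default margin is equivalent to $\epsilon\le\epsdef(b)=\min_i (r_i+b_i)/\alpha_i$.

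I would make this step explicit rather than rely on the informal phrasing of the Proposition. The point $p=\pbar$ is feasible for \eqref{eq:clearing_lp} iff
\[
\cbar+b+S\delta+A^\top\pbar\ge\pbar, \quad\text{i.e.}\quad r+b+S\delta\ge0 .
\]
It has objective value $0$, and the objective is nonnegative on the feasible set. Moreover, $\eta^\star=0$ forces $\1^\top(\pbar-p)=0$ with $p\le\pbar$, hence $p=\pbar$. So $\eta^\star(c)=0\iff r+b+S\delta\ge0$. Requiring this for all $\delta\in\U_p(\epsilon)$ gives, row by row,
\[
r_i+b_i-\max_{\norm{\delta}\le\epsilon}\sigma_i^\top\delta=r_i+b_i-\epsilon\norm{\sigma_i}_*\ge0 .
\]
Here $\norm{\cdot}_*$ is the dual norm: $\ell_1$ for $p=\infty$ and $\ell_\infty$ for $p=1$, matching the $\alpha_i$ used in \eqref{eq:min_budget_formula}.

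Minimizing over the budget set, $J_p(\epsilon,B)=0$ iff some $b\in\Bset(B)$ satisfies $b_i\ge\alpha_i\epsilon-r_i$ for all $i$. I need to confirm that the minimum in \eqref{eq:Jinf_budget_def} and \eqref{eq:J1_budget_def} is attained, so that value $0$ really corresponds to some $b$ with $J_p(\epsilon;b)=0$ and not only an infimum. This follows from the LP reformulations in the two preceding theorems: a feasible LP with objective bounded below by $0$ attains its optimum. By the Theorem on the default margin, the cheapest $b$ satisfying these inequalities is $b^\epsilon_i=[\alpha_i\epsilon-r_i]_+$, with cost $B_{\mathrm{def}}(\epsilon)$, and $q>0$ makes this componentwise choice optimal. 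Hence such a $b$ exists in $\Bset(B)$ iff $B\ge B_{\mathrm{def}}(\epsilon)$.

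I expect the main obstacle to be the bookkeeping around attainment and the $+\infty$ convention, not any deep step. I must ensure that $J_p(\epsilon,B)=0$ cannot arise as a limit, or from an infeasible configuration, and that banks with $\alpha_i=0$ are handled consistently. For such banks, $r_i>0$ by \eqref{eq:nominal_no_default}, so $b_i^\epsilon=0$ and the constraint is vacuous. Using the LP forms of the preceding theorems for attainment resolves the first issue cleanly.
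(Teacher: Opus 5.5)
Your proposal is correct and follows the same route as the paper: it reduces $J_p(\epsilon,B)=0$ to the existence of some $b\in\Bset(B)$ for which full clearing $p=\pbar$ is feasible under every admissible shock, i.e.\ $\epsdef(b)\ge\epsilon$, and then applies Theorem~1's top-up budget $B_{\mathrm{def}}(\epsilon)$. Your version only spells out what the paper leaves implicit: the equivalence $\eta^\star=0\iff r+b+S\delta\ge0$, the row-wise dual-norm reduction, attainment of the minimum over $\Bset(B)$, and the $\alpha_i=0$ convention.
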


\begin{proof}
$J_p(\epsilon,B)=0$ holds if and only if some $b\in\Bset(B)$ makes full clearing $p=\pbar$ feasible for every $\delta\in\U_p(\epsilon)$. This is equivalent to $\epsdef(b)\ge\epsilon$, because zero systemic loss means no defaults in any admissible scenario. Theorem~1 then gives the exact minimal budget.
\end{proof}

Equation \eqref{eq:zero_loss_budget} is operationally useful because it identifies the exact budget below which any loss-optimal design must accept positive residual contagion. Numerical results are eventually summarized in Table~\ref{tab:key_numbers}.\footnote{In Table~\ref{tab:key_numbers}, $\epsilon_p^\star(B)$ denotes the optimal default resilience margin at budget $B$; in these rows the `Margin' column is void 
because this value is the same  reported in the margin-optimal allocation column ‘Opt.’
In the 
$J_p(\epsilon,B)$-type rows,
the column ‘Margin’ gives the worst-case loss obtained, at the same stress radius 
$\epsilon$, by the allocation that maximizes the default resilience margin under the same budget $B$.}

\section{Numerical Examples}
All computations were performed with the HiGHS backend of \texttt{scipy.optimize.linprog}. For comparison we use two simple baselines. The uniform baseline splits intervention spending equally across banks, i.e., $q_i b_i=B/n$. The exposure-proportional baseline allocates spending proportionally to each bank's common-asset stress score, i.e., $q_i b_i=B\alpha_i/\sum_j\alpha_j$, where $\alpha_i$ is the score corresponding to the uncertainty set of the panel under consideration. In the loss plots, ``margin-optimal'' denotes the buffer obtained from \eqref{eq:default_margin_lp} at the same budget. We report three examples of increasing realism: an 8-node benchmark, a synthetic 353-node core-periphery network, and a public-data 107-bank EBA calibration.

\begin{figure*}[t]
    \centering
    \begin{subfigure}[t]{0.48\textwidth}
        \centering
        \includegraphics[width=\linewidth]{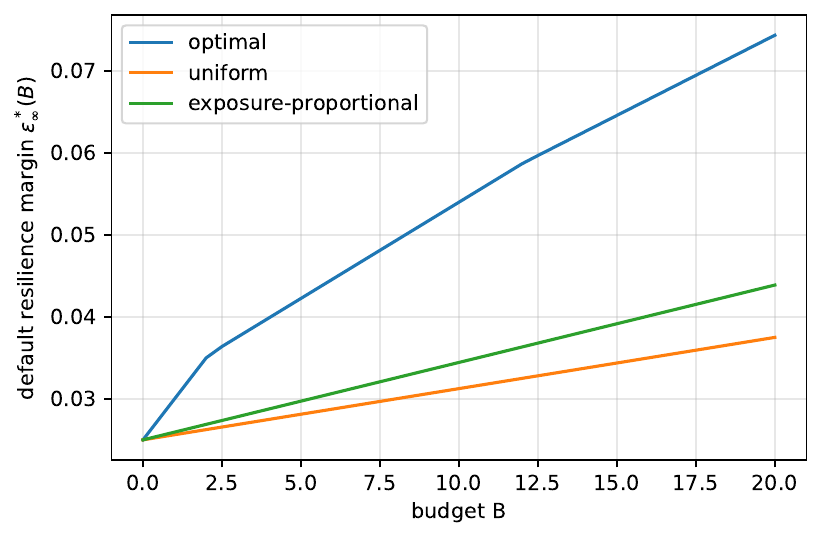}
        \caption{$\ell_\infty$ default margin.}
    \end{subfigure}\hfill
    \begin{subfigure}[t]{0.48\textwidth}
        \centering
        \includegraphics[width=\linewidth]{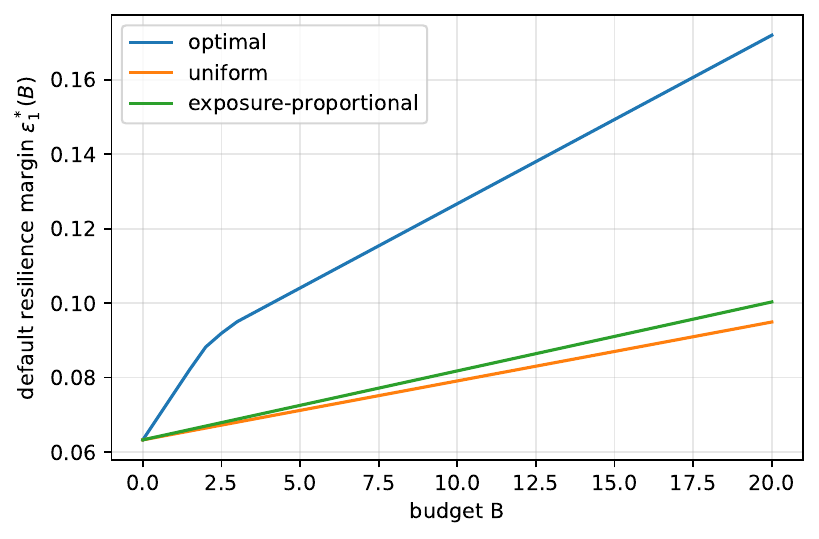}
        \caption{$\ell_1$ default margin.}
    \end{subfigure}

    \vspace{0.6em}
    \begin{subfigure}[t]{0.48\textwidth}
        \centering
        \includegraphics[width=\linewidth]{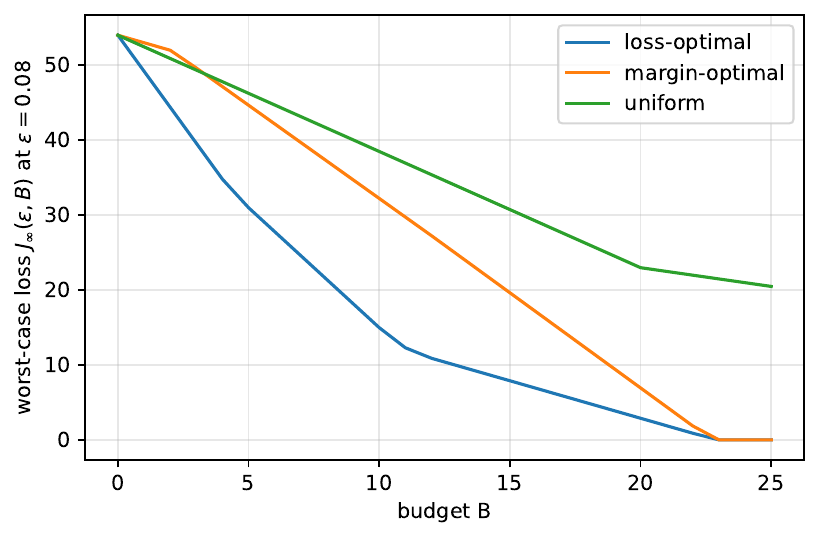}
        \caption{$J_\infty(0.08,B)$.}
    \end{subfigure}\hfill
    \begin{subfigure}[t]{0.48\textwidth}
        \centering
        \includegraphics[width=\linewidth]{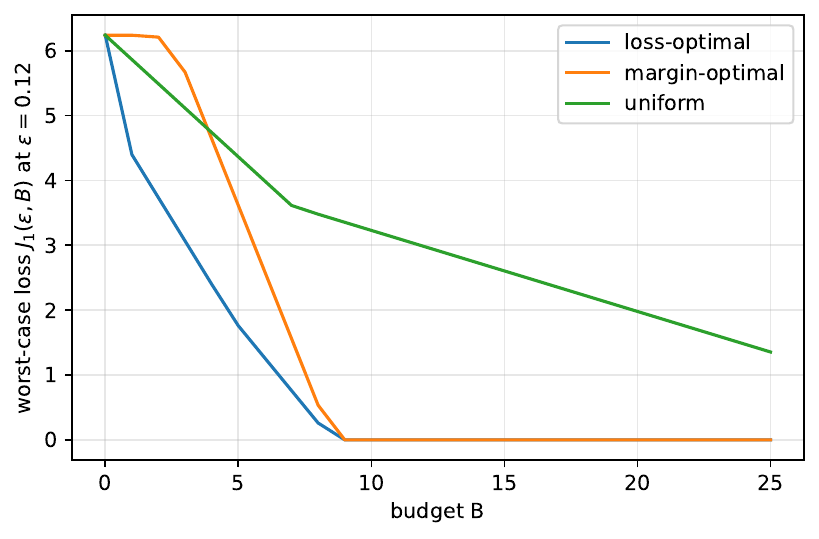}
        \caption{$J_1(0.12,B)$.}
    \end{subfigure}
    \caption{Eight-node benchmark. Optimal intervention dominates simple heuristics for both resilience maximization and loss minimization.}
    \label{fig:ex1_main}
\end{figure*}

\subsection{Eight-node benchmark}
We first reuse the 8-node, 4-asset benchmark in \cite{Calafiore2025}. With no intervention,
\begin{equation*}
\epsdef(0)=0.0250\ (\ell_\infty),\qquad \epsdef(0)=0.0633\ (\ell_1),
\end{equation*}
while the insolvency margins are $\epsub(0)=0.1875$ in the $\ell_\infty$ case and $\epsub(0)=0.4327$ in the $\ell_1$ case. Unit intervention costs $q_i\equiv1$ are used.

Fig.~\ref{fig:ex1_main}(a)-(b) shows the budget-response of the default resilience margin. At budget $B=10$, the optimal design lifts the no-default radius from $0.0250$ to $0.0540$ under $\ell_\infty$ shocks, compared with $0.0313$ for a uniform allocation and $0.0344$ for an exposure-proportional one. Under $\ell_1$ shocks the same budget raises the margin from $0.0633$ to $0.1267$, while the two heuristics reach only $0.0791$ and $0.0818$. This behavior matches Theorem~1: the optimal policy spends budget to raise the smallest normalized balance-sheet margins.

Fig.~\ref{fig:ex1_main}(c)-(d) shows that resilience maximization and loss minimization are different design goals. For $\epsilon=0.08$ and $B=10$, the loss-optimal policy reduces $J_\infty(\epsilon,B)$ from $53.96$ to $14.96$, whereas the margin-optimal policy still yields $32.19$ and the uniform allocation $38.46$. In the $\ell_1$ case, for $\epsilon=0.12$ and $B=5$, the loss-optimal design gives $J_1=1.76$ versus $3.62$ for the margin-optimal allocation and $4.37$ for the uniform one. Corollary~\ref{cor:zero_loss_budget} predicts exact zero-loss budgets $B_{\mathrm{def}}(0.08)=22.88$ for $\ell_\infty$ and $B_{\mathrm{def}}(0.12)=8.52$ for $\ell_1$, which coincide with the points where the optimized curves hit zero.

\begin{table}[t]
\centering
\caption{Optimal resilience margins and worst-case losses under different
buffer allocation rules.\label{tab:key_numbers}
}
\resizebox{\columnwidth}{!}{%
\begin{tabular}{ll
                S[table-format=2.4]
                c
                S[table-format=2.4]
                S[table-format=2.4]}
\toprule
Case & Metric & {Opt.} & {Margin} & {Uniform} & {Exp.-prop.} \\
\midrule
8-node
& $\varepsilon_\infty^*(10)$ & 0.0540 & -- & 0.0313 & 0.0344 \\
& $\varepsilon_1^*(10)$      & 0.1267 & -- & 0.0791 & 0.0818 \\
& $J_\infty(0.08,10)$        & 14.96  & 32.19 & 38.46 & 31.56 \\
& $J_1(0.12,5)$              & 1.76   & 3.62  & 4.37  & 4.14 \\
\midrule
CP-353
& $\varepsilon_\infty^*(25)$ & 0.0345 & -- & 0.0216 & 0.0219 \\
& $J_\infty(0.04,25)$        & 22.25  & 38.45 & 72.03 & 71.74 \\
\midrule
EBA-107
& $\varepsilon_\infty^*(1)$  & 0.1088 & -- & 0.0818 & 0.0817 \\
& $\varepsilon_1^*(5)$       & 0.1385 & -- & 0.1147 & 0.1162 \\
& $J_\infty(0.108,0.5)$      & 0.3148 & 0.3148 & 0.8199 & 0.8018 \\
& $J_1(0.135,3)$             & 0.9594 & 1.0427 & 3.9314 & 3.7552 \\
\bottomrule
\end{tabular}
}
\end{table}

\begin{figure*}[t]
    \centering
    \begin{subfigure}[t]{0.48\textwidth}
        \centering
        \includegraphics[width=\linewidth]{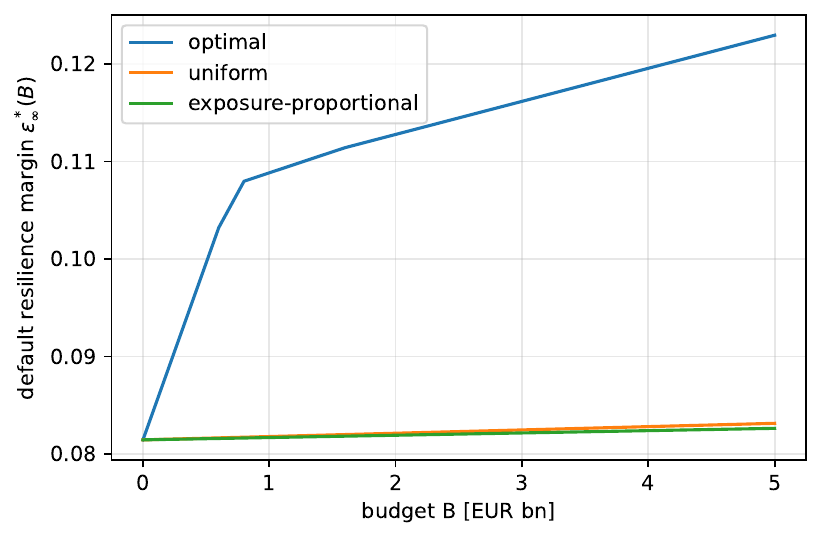}
        \caption{$\ell_\infty$ default margin.}
    \end{subfigure}\hfill
    \begin{subfigure}[t]{0.48\textwidth}
        \centering
        \includegraphics[width=\linewidth]{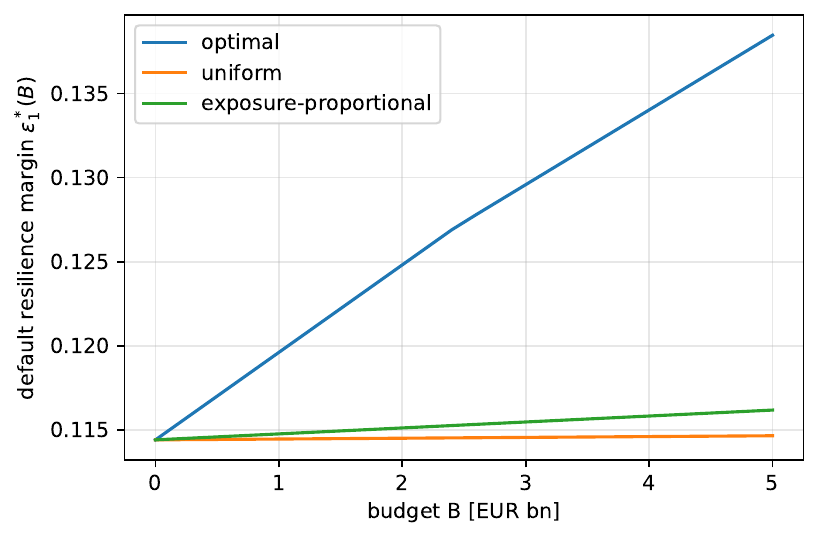}
        \caption{$\ell_1$ default margin.}
    \end{subfigure}

    \vspace{0.6em}
    \begin{subfigure}[t]{0.48\textwidth}
        \centering
        \includegraphics[width=\linewidth]{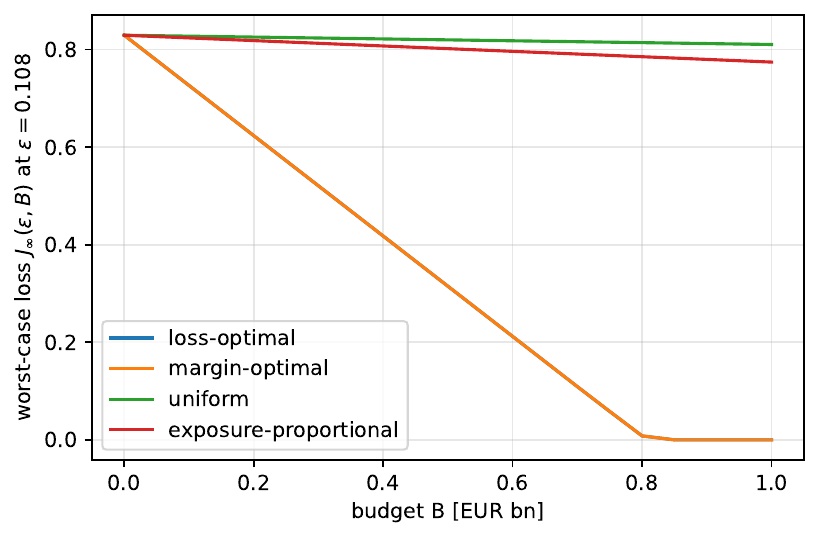}
        \caption{$J_\infty(0.108,B)$.}
    \end{subfigure}\hfill
    \begin{subfigure}[t]{0.48\textwidth}
        \centering
        \includegraphics[width=\linewidth]{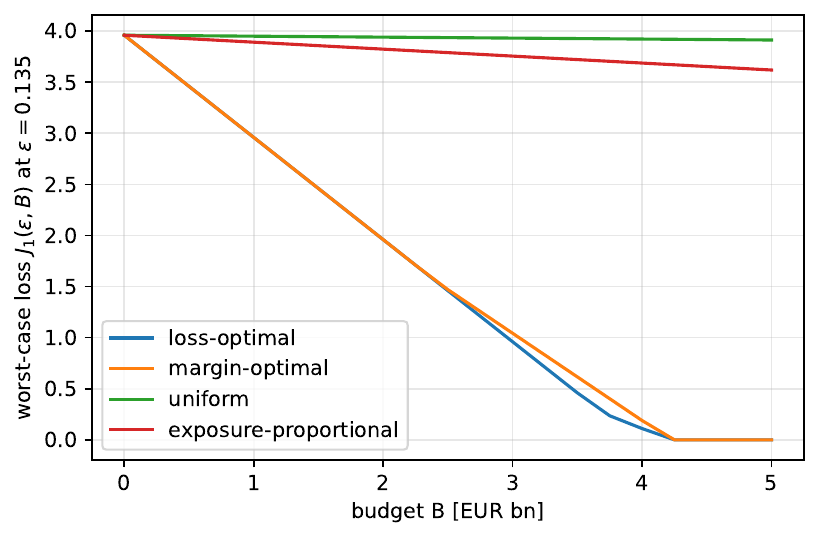}
        \caption{$J_1(0.135,B)$.}
    \end{subfigure}
    \caption{Data-backed 107-bank EBA calibration. Diffuse $\ell_\infty$ shocks make margin-optimal and loss-optimal designs nearly coincide, whereas concentrated $\ell_1$ shocks separate the two objectives.}
    \label{fig:eba_main}
\end{figure*}

\subsection{Synthetic 353-node core-periphery network}
We next discuss a synthetic 353-node, 5-asset core-periphery network. The instance has an 18-node dense core and a sparse periphery, with random liabilities and common-asset holdings generated with seed 42. Without intervention,
\begin{equation*}
\epsdef(0)=0.0200\ (\ell_\infty),\qquad \epsdef(0)=0.0208\ (\ell_1),
\end{equation*}
while the insolvency margins are $\epsub(0)=0.0493$ in the $\ell_\infty$ case and $\epsub(0)=0.0670$ in the $\ell_1$ case. At budget $B=25$, the optimal $\ell_\infty$ design raises the default margin to $0.0345$, versus $0.0216$ for uniform allocation and $0.0219$ for an exposure-proportional rule; see Table~\ref{tab:key_numbers}.

At the loss-design radius $\epsilon=0.04$, Fig.~\ref{fig:cp_loss} shows that the loss-optimal policy reduces the worst-case loss from $97.85$ to $22.25$, whereas the margin-optimal allocation still yields $38.45$ and the uniform rule $72.03$. Because the instance is synthetic, its role is not empirical calibration but isolation of a topology effect: most of the gain comes from concentrating support on a small subset of core institutions rather than spreading it across the periphery.

\subsection{Data-backed 107-bank EBA calibration}
Our third example uses the 2025 EBA EU-wide transparency exercise, reference date 30 June 2025, which reports bank-level supervisory disclosures for 119 banks across 25 EU/EEA countries \cite{EBA2025}. We retain the 107 institutions with positive interbank-asset proxy, positive interbank liabilities, and positive equity. Interbank asset totals are proxied by exposure value to institutions from the credit-risk templates, interbank liabilities by deposits from credit institutions from the liabilities template, and equity by total equity. 
Because public row and column
totals do not balance exactly on the filtered sample, both
sides are multiplicatively reconciled to a common aggregate
and a bilateral liabilities matrix is reconstructed by iterative
proportional fitting [4], [20]. This maximum-entropy approach 
is widely adopted in empirical network stress-testing because it generates 
the most dispersed, least biased network topology consistent with the 
observed marginals, thus preventing the spurious concentration of contagion risk.
The common-asset layer is formed
by reported sovereign holdings in the five largest issuer buckets
at that date: FR, IT, ES, US, and DE. All monetary quantities
are expressed in EUR bn.

Without intervention, we have
\begin{equation*}
\epsdef(0)=0.0814\ (\ell_\infty),\qquad \epsdef(0)=0.1144\ (\ell_1),
\end{equation*}
while the insolvency margins are $\epsub(0)=0.1099$ in the $\ell_\infty$ case and $\epsub(0)=0.1372$ in the $\ell_1$ case. Fig.~\ref{fig:eba_main}(a)-(b) shows that targeted intervention is dramatically more effective than simple heuristics. At budget $B=1$, the optimal $\ell_\infty$ design raises the default margin to $0.1088$, versus $0.0818$ for uniform allocation and $0.0817$ for exposure-proportional allocation. At budget $B=5$, the optimal $\ell_1$ design reaches $0.1385$, while the two heuristics remain at $0.1147$ and $0.1162$.

The loss-design results reveal a sharp dependence on shock geometry. Under diffuse $\ell_\infty$ stress, Fig.~\ref{fig:eba_main}(c) shows that the loss-optimal and margin-optimal curves coincide on this instance and hit zero exactly at the certified threshold $B_{\mathrm{def}}(0.108)=0.8080$, whereas uniform and exposure-proportional allocations remain far from zero. In contrast, under concentrated $\ell_1$ stress, Fig.~\ref{fig:eba_main}(d) shows a clear separation between the two objectives: at $\epsilon=0.135$ and $B=3$, the loss-optimal design yields $J_1=0.9594$, compared with $1.0427$ for the margin-optimal design, $3.9314$ for uniform allocation, and $3.7552$ for exposure-proportional allocation. The corresponding zero-loss threshold is $B_{\mathrm{def}}(0.135)=4.2209$. The optimal $\ell_1$ allocations are highly concentrated: at $B=3$, loss minimization places the entire budget on one bank, whereas resilience maximization splits it across two banks, and the exposure-proportional rule spreads support across 94 institutions. Thus the public-data calibration reaches the same qualitative conclusion as the 8-node benchmark example, but with a stronger empirical angle: diffuse shocks align certification and performance, whereas concentrated shocks separate them.

\begin{figure}[t]
    \centering
    \includegraphics[width=\columnwidth]{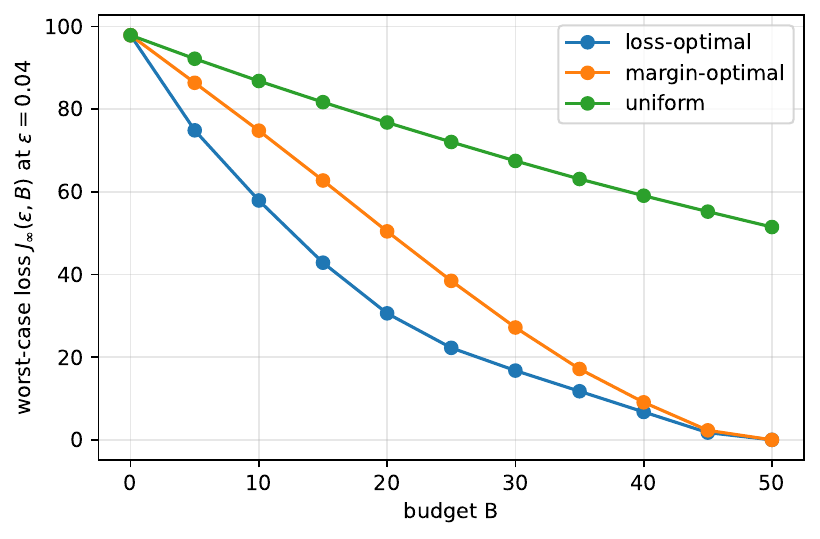}
    \caption{Synthetic 353-node core-periphery network. At $\epsilon=0.04$, topology-aware loss-optimal design markedly outperforms margin-optimal and uniform allocations.}
    \label{fig:cp_loss}
\end{figure}

\section{Conclusions}
We introduced a budgeted pre-shock intervention layer in a robust financial contagion model with common asset exposures. The resulting regulator problems remain exact and computationally simple: default-margin maximization admits both an LP formulation and a closed-form minimal-budget certificate, insolvency-margin maximization is an LP, and loss minimization is an LP under both $\ell_\infty$ and $\ell_1$ uncertainty. The additional zero-loss budget corollary makes explicit when robust certification and robust loss minimization coincide.

The numerical results show three practical lessons. First, targeted interventions are far more effective than uniform ones, even under small budgets. Second, the best allocation depends strongly on both the design objective and the shock geometry: maximizing the shock radius that can be certified as safe is not the same as minimizing the loss once defaults are already admissible, especially under concentrated $\ell_1$ shocks. Third, the method scales beyond the academic benchmark. In the 353-node synthetic network, the $\ell_\infty$ loss-design LP has 706 decision variables; in the 107-bank EBA calibration, the corresponding $\ell_\infty$ LP has 214 variables and the $\ell_1$ formulation has 643 because of the five scenario blocks. These sizes remained easily manageable with HiGHS through \texttt{scipy.optimize.linprog}, making the approach practical for repeated stress-testing, budget scans, and outer design loops. Dynamic intervention design \cite{Calafiore2023}, debt-forgiveness and fire-sale mechanisms \cite{RogersVeraart2013,FeinsteinHallaj2023,AraratMeimanjan2023}, and topology co-design \cite{Hu2024} appear as natural next steps for research.

Two caveats deserve emphasis. First, the model is one-period and the intervention is static: buffers are chosen once, before shocks realize, and are not adapted over time. Second, asset-price effects enter through exogenous uncertainty sets rather than endogenous fire-sale dynamics, regulatory constraints, or strategic responses. The present contribution should therefore be viewed as an exact inner design layer that can be embedded in richer stress-testing architectures.

\balance

\end{document}